\newtheorem{theorem}{Theorem}
\newtheorem{corollary}{Corollary}
\renewcommand{\vec}[1]{\mathbf{#1}}
\def\blfootnote{\xdef\@thefnmark{}\@footnotetext}
\def\BibTeX{{\rm B\kern-.05em{\sc i\kern-.025em b}\kern-.08em
    T\kern-.1667em\lower.7ex\hbox{E}\kern-.125emX}}
\begin{document}

\title{On Performance of FAS-aided Wireless Powered \\NOMA Communication Systems
\thanks{This work was supported by the Engineering and Physical Sciences Research Council (EPSRC) under Grant EP/W026813/1.}
}

\author{
\IEEEauthorblockN{1\textsuperscript{st} Farshad Rostami Ghadi}
\IEEEauthorblockA{\textit{Dept. of Electronic and Electrical Engineering} \\
\textit{University College London}\\
London, United Kingdom \\
f.rostamighadi@ucl.ac.uk}
\and
\IEEEauthorblockN{2\textsuperscript{nd} Masoud Kaveh}
\IEEEauthorblockA{\textit{Dept. of Information and Communication Engineering} \\
\textit{Aalto University}\\
Espoo, Finland \\
masoud.kaveh@aalto.fi}
\and
\IEEEauthorblockN{3\textsuperscript{rd} Kai-Kit Wong}
\IEEEauthorblockA{\textit{Dept. of Electronic and Electrical Engineering} \\
\textit{University College London}\\
London, United Kingdom \\
kai-kit.wong@ucl.ac.uk}
\and
\IEEEauthorblockN{4\textsuperscript{th} Riku Jäntti}
\IEEEauthorblockA{\textit{Dept. of Electrical Engineering} \\
\textit{Aalto University}\\
Espoo, Finland \\
riku.jantti@aalto.fi}
\and
\IEEEauthorblockN{5\textsuperscript{th} Zheng Yan}
\IEEEauthorblockA{\textit{School of Cyber Engineering} \\
\textit{Xidian University}\\
Xi'an, China \\
zyan@xidian.edu.cn}
\and
}

	\maketitle
 
	\begin{abstract}
This paper studies the performance of a wireless powered communication network (WPCN) under the non-orthogonal multiple access (NOMA) scheme, where users take advantage of an emerging fluid antenna system (FAS). More precisely, we consider a scenario where a transmitter is powered by a remote power beacon (PB) to send information to the planar NOMA FAS-equipped users through Rayleigh fading channels. After introducing the distribution of the equivalent channel coefficients to the users, we derive compact analytical expressions for the outage probability (OP) in order to evaluate the system performance. Additionally, we present asymptotic OP in the high signal-to-noise ratio (SNR) regime. Eventually, results reveal that deploying the FAS with only one activated port in NOMA users can significantly enhance the WPCN performance compared with using traditional antenna systems (TAS). 
	\end{abstract}
 
	\begin{IEEEkeywords}
	Wireless powered communication network, fluid antenna system, non-orthogonal multiple access, correlated fading channel, outage probability
	\end{IEEEkeywords}
	\maketitle
	
	\vspace{0mm}
	\section{Introduction}\label{sec-intro}
	\IEEEPARstart{O}{ne} of the most paramount challenges in designing the next generation of mobile communication systems, i.e., sixth-generation (6G) technology, is the limited battery life of mobile devices. Indeed, battery replacement or recharging can incur significant costs due to the extensive deployment of mobile devices, which severely limits the energy efficiency of wireless communication networks \cite{bi2015wireless}. To tackle this issue, thanks to the recent advance in radio frequency (RF) signals-based wireless energy transfer (WET) technology, wireless powered communication network (WPCN) has been introduced as a promising paradigm to remotely power up mobile devices by providing continuous microwave energy over the air \cite{bi2016wireless}. 
 More precisely, in contrast to conventional approaches,  WPCN is capable of configuring the transmit power, waveforms, occupied time and frequency dimensions, and other critical network parameters based on physical conditions and service requirements. Moreover, this technology is able to efficiently transmit tens of microwatts of RF power to mobile devices located at short-range distances, which makes it potentially suitable for low-power devices in various applications such as the Internet of Things (IoT), device-to-device (D2D) communications, RF identification (RFID) networks, and  backscatter communications (BC).
	
	Apart from the aforementioned challenge, another important obstacle in future mobile technology is scaling the number of antennas in the small space of mobile devices to enhance multiplexing gains and network capacity, which affects the system performance in terms of power consumption, implementation complexity, channel estimation, signal processing, etc. In this regard, the fluid antenna system (FAS) has recently emerged as a potential technology to enhance the multiple-input multiple-output (MIMO) system, introducing a new degree of freedom through antenna position flexibility \cite{wong2020fluid}. Specifically, a fluid antenna is a software-manageable fluidic or dielectric structure that can change its shape and position to configure the radiation characteristics. This important property, which is the main superiority of FAS over traditional antenna systems (TAS), is due to recent developments in utilizing liquid metals and RF switchable pixels for antennas, which can potentially enhance the overall system performance and transmission reliability  \cite{wong2022bruce}.
	
Over the recent years, great efforts have been carried out to evaluate the effect of WPCN in different applications such as IoT \cite{chen2019resource,vu2021performance,do2020enabling}, D2D \cite{deng2018energy,van2020performance,lu2017wireless}, and BC \cite{han2017wirelessly,ghadi2022capacity,lyu2017wireless}. Moreover, outstanding contributions have been recently made related to deploying FAS in various wireless communication scenarios, mainly focusing on channel modeling \cite{wong2021fluid,khammassi2023new,ghadi2023copula,ghadi2023gaussian} and estimation \cite{skouroumounis2022fluid,xu2023channel}, performance analysis \cite{new2023fluid,ghadi2023fluid1,tlebaldiyeva2022enhancing,ghadi2024performance,vega2023novel,ghadi2024cache}, and optimization \cite{chen2023energy,wang2024fluid,xu2023capacity,ye2023fluid}. Furthermore, it is worth noting that most existing works in the context of FAS are based on the orthogonal multiple access (OMA) technique, which is not an optimum scheme when the channel state information (CSI) at the transmitter is known. For this purpose, only recently, several works considered the non-OMA (NOMA) principle for the FAS, which is more efficient compared with OMA, thanks to the superposition coding and successive interference cancellation (SIC) method \cite{new2023fluid1,tlebaldiyeva2023full,zheng2024fas}.
	
Nevertheless, the synergy of combining FAS with WPCN is not thoroughly grasped. The only research in this context has been carried out in \cite{ghadi2024performance1}, where the authors considered a one-dimensional (1D) FAS-equipped tag in the BC (i.e., the product channel) under the OMA scheme. In contrast to the previous work, in this paper, we propose a more general WPCN, where a single-antenna transmitter, which is remotely powered up by a power beacon (PB), simultaneously sends information to NOMA FAS-equipped users. Besides, each NOMA user includes a planar that is able to switch to the best position (i.e., port) in a pre-defined two-dimensional (2D) space for reception. In this regard, the main contributions of this work are as follows: (i) By assuming that each channel undergoes Rayleigh fading, we first derive the cumulative distribution function (CDF) of the equivalent channel at NOMA FAS-equipped users with the help of the Gaussian copula; (ii) Then, we derive compact analytical expressions for the outage probability (OP) in terms of the multivariate normal CDF; (iii) Further, we obtain asymptotic expressions for the OP to gain insights regarding system's performance in high signal-to-noise ratio (SNR) regime; (iv) Finally, after conducting extensive simulations, the results indicate that applying FAS in WPCN is beneficial for NOMA users, ensuring more reliable transmission compared with deploying TAS. 
	\section{System Model}\label{sec-sys}
	\subsection{Chanel Model}
We consider a WPCN as shown in Fig. \ref{fig-system}, where a transmitter $\mathrm{t}$ simultaneously broadcasts the superposed signal including information to the NOMA users $i\in\left\{\mathrm{u}_1,\mathrm{u}_2\right\}$ over wireless fading channels, exploiting the energy wirelessly supplied by a PB. In this regard, the wireless channel between the PB and the transmitter $\mathrm{t}$ is called the \textit{energy link}, while the channels between the transmitter $\mathrm{t}$ and near user $\mathrm{u}_1$ (i.e., strong user) and between the transmitter $\mathrm{t}$ and far user $\mathrm{u}_2$ (i.e., weak user) are referred to as \textit{information links}. It is assumed that the PB and transmitter $\mathrm{t}$ are single-antenna nodes, while NOMA users $i$ are equipped with a planar FAS. More precisely, the FAS-equipped NOMA users $i$ include a grid structure with $N^l_i$ ports that are  uniformly distributed along a linear space of length $W^l_i\lambda$ for $l\in\left\{1,2\right\}$, i.e., $N_i=N^1_i\times N^2_i$ and $W_i=W^1_i\lambda\times W^2_i\lambda$. Moreover, in order to transform the 2D indices to the 1D index, an applicable mapping function $\mathcal{F}\left(n_i\right)=\left(n_i^1,n_i^2\right)$, $n_i=\mathcal{F}^{-1}\left(n_i^1,n_i^2\right)$ is defined, where $n_i\in\left\{1,\dots,N_i\right\}$ and $n_i^l\in\left\{1,\dots,N_i^l\right\}$. Under such assumptions, the received signal at the $n_i$-th port of the user $i$ is expressed as
		\begin{align}
		y_i^{n_i}=\sqrt{P_\mathrm{p}\delta_i}h_{\mathrm{eq},i}^{n_i}\left(\sqrt{p_\mathrm{u_1}}x_\mathrm{u_1}+\sqrt{p_\mathrm{u_2}}x_\mathrm{u_2}\right)+z_i^{n_i},
	\end{align}
in which $P_\mathrm{p}$ denotes the transmitted power by the PB and $h_{\mathrm{eq},i}^{n_i}=h_\mathrm{t}h_i^{n_i}$ is the equivalent channel coefficient at the $n_i$-th port of the user $i$ that include the channel coefficients of the energy link $h_\mathrm{t}$ and the information links $h_i^{n_i}$. Besides, $x_i$ represents the symbol sent to the user $i$ with unit power, $p_i$ defines the power allocation factor for user $i$, where $p_\mathrm{u_1}+p_\mathrm{u_2}=1$, and $z_i^{n_i}$ is the additive white Gaussian noise (AWGN) with zero mean and variance $\sigma^2$ at the $n_i$-th port of user $i$. Moreover, $\delta_i=L_\mathrm{p}d_\mathrm{t}^{-\alpha}d_i^{-\alpha}$ defines the path-loss, in which $L_\mathrm{p}$ is the frequency-dependent signal propagation losses, $\alpha>2$ denotes the path-loss exponent, and $d_\mathrm{t}$ is the distance between the PB and transmitter $\mathrm{t}$ and $d_i$ represents the distance between transmitter $\mathrm{t}$ and the NOMA FAS-equipped user $i$. 
	
Further, the ports in FAS can freely switch and be in close proximity to each other; thereby, the channel coefficients at each NOMA FAS-equipped user exhibit spatial correlation. By assuming that the energy and information links suffer Rayleigh fading, the covariance between two arbitrary ports $n_i$ and $\tilde{n}_i$ at each user $i$ in a three-dimensional (3D) environment under rich scattering can be defined as 
\vspace{-6pt}
	\begin{align}
	\varrho_{n_i,\tilde{n}_i}^i=j_0\left(2\pi\sqrt{\left(\frac{n^1_i-\tilde{n}^1_i}{N_i^1-1}W_i^1\right)^2+\left(\frac{n^2_i-\tilde{n}^2_i}{N_i^2-1}W_i^2\right)^2}\right),
	\end{align}
	 in which $\tilde{n}_i=\mathcal{F}^{-1}\left(\tilde{n}_i^1,\tilde{n}_i^2\right)$ so that $\tilde{n}_i^l\in\left\{1,\dots,N_i^l\right\}$ and $j_0(.)$ denotes  the zero-order spherical Bessel function of the first kind. 
	 \begin{figure}[!t]
	 	\centering
	 \includegraphics[width=0.9\columnwidth]{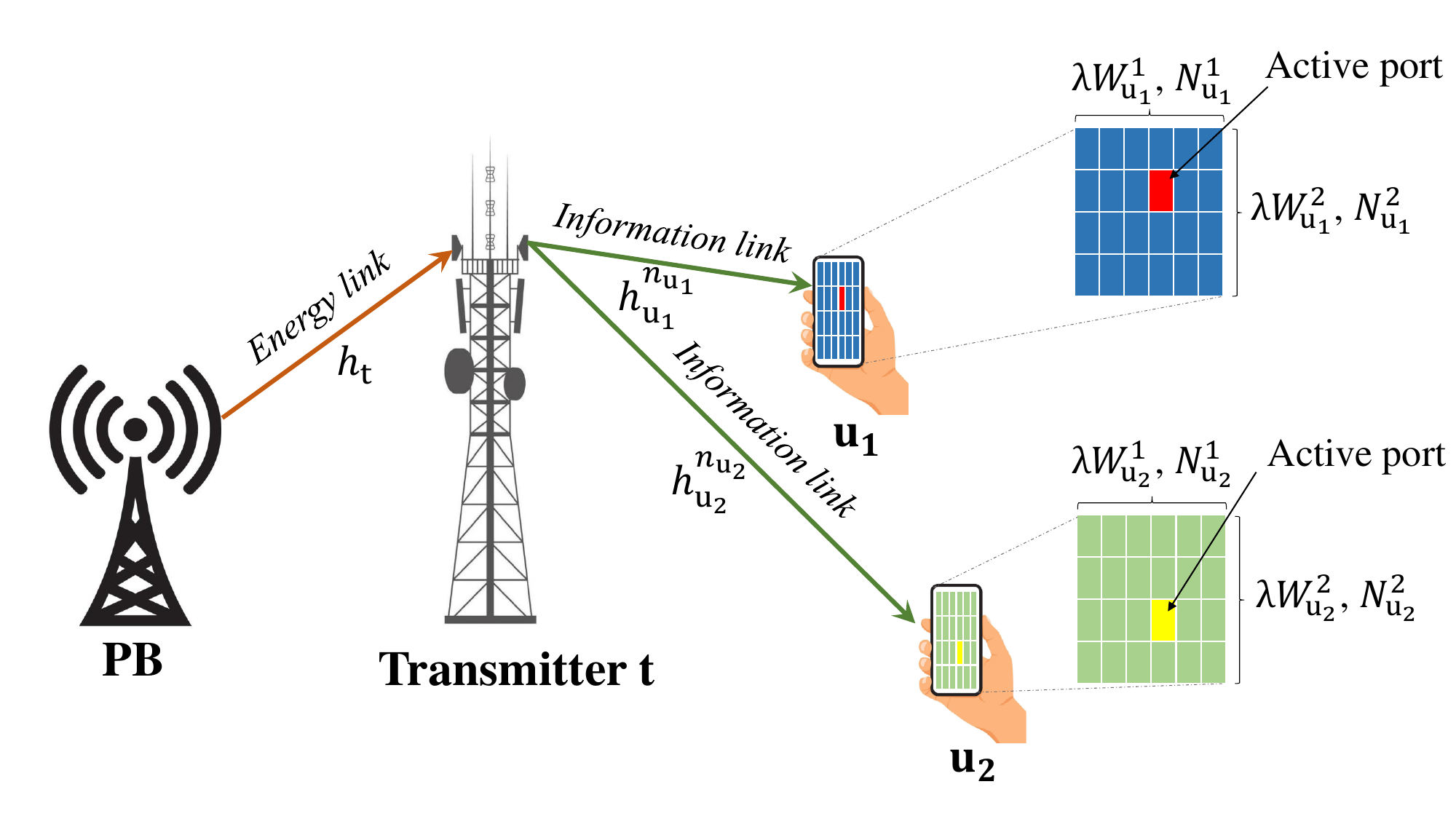}\vspace{-0.5cm}
	 	\caption{System model: FAS-aided WPCN.}\vspace{-0.5cm}
	 	\label{fig-system}
	 \end{figure}
  \vspace{-4pt}
	 \subsection{NOMA-FAS Scheme}
Following the NOMA principle, the optimal decoding technique involves implementing SIC according to the strong FAS-equipped user $\mathrm{u}_1$, which means that the weak FAS-equipped user $\mathrm{u}_2$ must decode its signal directly by treating interference as noise. Hence, by assuming that only the optimal port that maximizes the received signal-to-interference-noise ratio (SINR) at the FAS-equipped users is activated, the SINR of the SIC process can be given by
\vspace{-6pt}
	\begin{align}
	\gamma_\mathrm{sic}=\frac{\overline{\gamma}p_\mathrm{u_2}L_\mathrm{p}d_\mathrm{t}^{-\alpha}d_\mathrm{u_1}^{-\alpha}\left|h_{\mathrm{eq},{\mathrm{u_1}}}^{n^*_{\mathrm{u_1}}}\right|^2}{\overline{\gamma}p_\mathrm{u_1}L_\mathrm{p}d_\mathrm{t}^{-\alpha}d_\mathrm{u_1}^{-\alpha}\left|h_{\mathrm{eq},{\mathrm{u_1}}}^{n^*_{\mathrm{u_1}}}\right|^2+1},\label{eq-sinr-sic}
	\end{align}	 
	where $\overline{\gamma}=\frac{P_\mathrm{p}}{\sigma^2}$ defines the average SNR and $n_i^*$ denotes the best port's index at the user $i$, i.e., 
 
	\begin{align}
		n_i^*=\underset{n}{\arg\max}\left\{\left|h_{\mathrm{eq},i}^{n_i}\right|^2\right\}.
	\end{align} 
	Thus, the channel gain at the FAS-equipped user $i$ is given by
		\begin{align}
		g_{\mathrm{fas},i}=\max\left\{\left|h_{\mathrm{eq},{i}}^1\right|^2,\dots,\left|h_{\mathrm{eq},{i}}^{N_i}\right|^2\right\}, \label{eq-g-fas}
	\end{align}
	in which 
	$g_{\mathrm{eq},i}^{n_i}=\left|h_{\mathrm{eq},{i}}^{n_i}\right|^2$. 
	
After SIC, $\mathrm{u}_1$ removes the message of $\mathrm{u}_2$ from its received signal and then decodes its required information. Therefore, the received SNR at the FAS-equipped user $\mathrm{u}_1$ is defined as 
		\begin{align}
		\gamma_\mathrm{u_1}=\overline{\gamma}p_\mathrm{u_1}L_\mathrm{p}d_\mathrm{t}^{-\alpha}d_\mathrm{u_1}^{-\alpha}\left|h_{\mathrm{eq},{\mathrm{u_1}}}^{n^*_{\mathrm{u_1}}}\right|^2. \label{eq-snr-u1}
	\end{align}
At the same time, the weak user $\mathrm{u}_2$ directly decodes its own signal but lacks the capability to filter out the signal from the strong user $\mathrm{u}_2$ within the combined transmitted message. Therefore, the received SINR at the weak FAS-equipped user $\mathrm{u}_2$ is expressed as
			\begin{align}
		\gamma_\mathrm{u_2}=\frac{\overline{\gamma}p_\mathrm{u_2}L_\mathrm{p}d_\mathrm{t}^{-\alpha}d_\mathrm{u_2}^{-\alpha}\left|h_{\mathrm{eq},{\mathrm{u_2}}}^{n^*_{\mathrm{u_2}}}\right|^2}{\overline{\gamma}p_\mathrm{u_1}L_\mathrm{p}d_\mathrm{t}^{-\alpha}d_\mathrm{u_2}^{-\alpha}\left|h_{\mathrm{eq},{\mathrm{u_2}}}^{n^*_{\mathrm{u_2}}}\right|^2+1}.
	\end{align}
\section{Performance Analysis}
Here, after introducing the distribution of the equivalent channel to the FAS-equipped users, we obtain analytical expressions of the OP for both NOMA users. 
\subsection{Statistical Characteristics}
In order to mathematically analyze the important performance metrics in the considered FAS-aided WPCN, first, it is required to derive the distribution of the equivalent fading channel coefficients at the NOMA users, i.e., the maximum of $N_i$ correlated RVs so that each includes the product of two independent exponentially-distributed RVs (see \eqref{eq-g-fas}). In this regard, given that energy and information links undergo Rayleigh fading, their corresponding channel gains  $g_\mathrm{t}=\left|h_\mathrm{t}\right|^2$ and $g_i^{n_i}=\left|h_i^{n_i}\right|^2$ have the exponential distribution with unit mean. Therefore, the CDF of the equivalent channel gain $g_{\mathrm{eq},i}^{n_i}=g_\mathrm{t}g_i^{n_i}$ at the user $i$ is determined as
	\begin{align}
	F_{g_{\mathrm{eq},i}^{n_i}}(r)&=\int_0^\infty f_{g_\mathrm{t}}(g_\mathrm{t})F_{g_i^{n_i}}\left(\frac{r}{g_\mathrm{t}}\right)\mathrm{d}g_\mathrm{t}\\
	&=1-\int_0^\infty \exp\left(-\left(g_\mathrm{t}+\frac{r}{g_\mathrm{t}}\right)\right)\mathrm{d}g_\mathrm{t}\\
	& \overset{(a)}{=}1-2\sqrt{r}\mathcal{K}_1\left(2\sqrt{r}\right), \label{eq-cdf-eq}
	\end{align}
	in which $f_{g_\mathrm{t}}(\cdot)$ and $F_{g_i^{n_i}}(\cdot)$ are the marginal probability density function (PDF) and CDF of exponentially-distributed $g_\mathrm{t}$ and $g_i^{n_i}$. Besides, $(a)$ is obtained by utilizing the integral format provided in \cite[3.471.9]{gradshteyn2007table} in which $\mathcal{K}_1(.)$ represents the first-order modified Bessel function
	of the second kind. 
	
	Next, by using the definition, the CDF of $g_{\mathrm{fas},i}$ can be mathematically expressed as follows
 
	\begin{align}
		F_{\mathrm{fas},i}(r)&=\Pr\left(\max\left\{g_{\mathrm{eq},i}^1,\dots,g_{\mathrm{eq},i}^{N_i}\right\}\leq r\right)\\
		&=\Pr\left(g_{\mathrm{eq},i}^1\leq r,\dots,g_{\mathrm{eq},i}^{N_i}\leq r\right)\\
		&=F_{g_{\mathrm{eq},i}^1,\dots,g_{\mathrm{eq},i}^{N_i}}\left(r,\dots,r\right), \label{eq-fas1}
	\end{align}
where \eqref{eq-fas1} refers to the definition of the multivariate distribution of $N_i$ correlated random variables (RVs) $g_{\mathrm{eq}_i}^{n_i}$ for $n_i\in\left\{1,\dots,N_i\right\}$. To derive such a distribution, we exploit Sklar's theorem \cite{ghadi2023gaussian} which can generate the multivariate distribution of $N_i$ correlated arbitrary RVs, e.g., $g_{\mathrm{fas},i}$, by only knowing the corresponding marginal distributions. Hence, $F_{\mathrm{fas},i}(r)$ can be given by
	\begin{align}
		F_{g_{\mathrm{fas},i}}\left(r\right)=\Phi_{\vec{R}_i}\left(\phi^{-1}\left(F_{g_{\mathrm{eq},i}^1}\left(r\right)\right)\hspace{-1mm},\dots,\phi^{-1}\left(F_{g_{\mathrm{eq},i}^{N_i}}\left(r\right)\right)\hspace{-1mm};\vartheta_i\right),\label{eq-cdf-fas}
	\end{align}
	where $\phi^{-1}\left(F_{g_{\mathrm{eq},i}^{n_i}}\left(r\right)\right)=\sqrt 2\mathrm{erf}^{-1}\left(2F_{g_{\mathrm{eq},i}^{n_i}}\left(r\right)-1\right)$ indicates the quantile function of the standard normal distribution and $\vartheta_i$ is the dependence parameter, i.e., the Gaussian copula parameter \footnote{The Gaussian copula can accurately model the spatial correlation between fluid antenna ports, especially when the fluid antenna size is large. In other words, the dependence parameter of the Gaussian copula can be approximated in terms of the covariance of two arbitrary ports, i.e., $\varrho^i\approxeq\vartheta_i$ \cite{ghadi2023gaussian}.}. The term $\Phi_{\vec{R}_i}$ denotes the joint CDF of the multivariate normal distribution with a zero mean vector and the following correlation matrix $\vec{R}_i$
	\begin{align}
		\vec{R}_i=\begin{bmatrix}
			\varrho_{1,1}^i & \varrho_{1,2}^i &\dots& \varrho_{1,N_i}^i\\
			\varrho_{2,1}^i & \varrho_{2,2}^i &\dots& \varrho_{2,N_i}^i\\ \vdots & \vdots & \ddots & \vdots\\
			\varrho_{N_i,1}^i & \varrho_{N_i,2}^i &\dots& \varrho_{N_i,N_i}^i
		\end{bmatrix}.
	\end{align}
	\subsection{OP Analysis}
	OP is a key performance metric to assess wireless communication systems, which is defined as the probability that the instantaneous SNR $\gamma$ is below the SNR threshold $\gamma_\mathrm{th}$, i.e., $P_\mathrm{out}=\Pr\left(\gamma\leq\gamma_\mathrm{th}\right)$.
	\begin{theorem}\label{thm-op-u1}
		The OP for the strong FAS-equipped user $\mathrm{u}_1$ over the considered FAS-aided WPCN is given by \eqref{eq-op1}, in which 
		\begin{figure*}
			\begin{align}
				P_\mathrm{out,u_1} = \Phi_{\vec{R}_{\mathrm{u_1}}}\left(\sqrt{2}\mathrm{erf}^{-1}\left(1-4\sqrt{\tilde{\gamma}_\mathrm{max}}\mathcal{K}_1\left(2\sqrt{\tilde{\gamma}_\mathrm{max}}\right)\right),\dots,\sqrt{2}\mathrm{erf}^{-1}\left(1-4\sqrt{\tilde{\gamma}_\mathrm{max}}\mathcal{K}_1\left(2\sqrt{\tilde{\gamma}_\mathrm{max}}\right)\right)\hspace{0mm};\vartheta_1\right)\label{eq-op1}
			\end{align}
		\hrulefill
		\end{figure*}
		\begin{align}
		\tilde{\gamma}_\mathrm{sic}=\frac{\hat{\gamma}_\mathrm{sic}d_\mathrm{t}^{\alpha}d_\mathrm{u_1}^{\alpha}}{\overline{\gamma}L_\mathrm{p}\left(p_\mathrm{u_2}-\hat{\gamma}_\mathrm{sic}p_\mathrm{u_1}\right)}, \quad \tilde{\gamma}_\mathrm{u_1}=\frac{\hat{\gamma}_\mathrm{u_1}d_\mathrm{t}^{\alpha}d_\mathrm{u_1}^{\alpha}}{\overline{\gamma}p_\mathrm{u_1}L_\mathrm{p}}.\label{eq-tildes}
		\end{align}
	\end{theorem}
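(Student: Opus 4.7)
The plan is to recast the two failure conditions for user $\mathrm{u}_1$ as threshold conditions on the single scalar $g_{\mathrm{fas},\mathrm{u_1}}=\left|h_{\mathrm{eq},\mathrm{u_1}}^{n^*_\mathrm{u_1}}\right|^2$, and then invoke the copula-based CDF in \eqref{eq-cdf-fas} together with the marginal in \eqref{eq-cdf-eq}. Since $\mathrm{u}_1$ must both cancel $\mathrm{u}_2$'s signal via SIC and subsequently decode its own, the outage event is the union
\begin{align*}
P_\mathrm{out,u_1}=\Pr\left(\gamma_\mathrm{sic}<\hat{\gamma}_\mathrm{sic}\ \text{or}\ \gamma_\mathrm{u_1}<\hat{\gamma}_\mathrm{u_1}\right).
\end{align*}
Using \eqref{eq-sinr-sic} and \eqref{eq-snr-u1}, both events depend only on $g_{\mathrm{fas},\mathrm{u_1}}$. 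Inverting the SIC-SINR inequality (and noting that the NOMA feasibility $p_\mathrm{u_2}>\hat{\gamma}_\mathrm{sic}p_\mathrm{u_1}$ must hold, otherwise SIC always fails and the OP is trivially $1$) yields $g_{\mathrm{fas},\mathrm{u_1}}<\tilde{\gamma}_\mathrm{sic}$, while the post-SIC SNR inequality yields $g_{\mathrm{fas},\mathrm{u_1}}<\tilde{\gamma}_\mathrm{u_1}$, with the two thresholds as in \eqref{eq-tildes}.

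Because these are both lower-tail events on the same RV, their union collapses to $\{g_{\mathrm{fas},\mathrm{u_1}}<\tilde{\gamma}_\mathrm{max}\}$ where $\tilde{\gamma}_\mathrm{max}=\max(\tilde{\gamma}_\mathrm{sic},\tilde{\gamma}_\mathrm{u_1})$, so
\begin{align*}
P_\mathrm{out,u_1}=F_{g_{\mathrm{fas},\mathrm{u_1}}}\!\left(\tilde{\gamma}_\mathrm{max}\right).
\end{align*}
I would then substitute $r=\tilde{\gamma}_\mathrm{max}$ into \eqref{eq-cdf-fas}. Using \eqref{eq-cdf-eq}, each marginal argument simplifies to $2F_{g_{\mathrm{eq},\mathrm{u_1}}^{n_\mathrm{u_1}}}(\tilde{\gamma}_\mathrm{max})-1=1-4\sqrt{\tilde{\gamma}_\mathrm{max}}\mathcal{K}_1\left(2\sqrt{\tilde{\gamma}_\mathrm{max}}\right)$, which is exactly the argument of every $\mathrm{erf}^{-1}$ appearing in \eqref{eq-op1}. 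Since the energy-link gain $g_\mathrm{t}$ and every information-link gain $g_\mathrm{u_1}^{n_\mathrm{u_1}}$ are unit-mean exponential, all $N_\mathrm{u_1}$ ports share this common marginal, so the vector fed into $\Phi_{\vec{R}_\mathrm{u_1}}$ is the constant-entry vector displayed in \eqref{eq-op1}.

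The main obstacle is essentially algebraic bookkeeping: verifying that each of the two outage events reduces to a one-sided threshold event on the same scalar RV so that the union genuinely collapses to a single $\max$, and explicitly tracking the NOMA feasibility constraint $\hat{\gamma}_\mathrm{sic}p_\mathrm{u_1}<p_\mathrm{u_2}$ that makes $\tilde{\gamma}_\mathrm{sic}$ well-defined and positive. Once those points are checked, the theorem follows by direct substitution into \eqref{eq-cdf-fas} and \eqref{eq-cdf-eq}, with no further analytic work required.
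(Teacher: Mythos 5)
Your proposal is correct and follows essentially the same route as the paper: complementing the joint success event, noting that the case $p_\mathrm{u_2}\leq\hat{\gamma}_\mathrm{sic}p_\mathrm{u_1}$ gives $P_\mathrm{out,u_1}=1$, collapsing the two lower-tail conditions on $g_{\mathrm{fas},\mathrm{u_1}}$ into a single threshold $\tilde{\gamma}_\mathrm{max}$, and substituting the marginal \eqref{eq-cdf-eq} into the Gaussian-copula CDF \eqref{eq-cdf-fas}. No discrepancy with the paper's own proof.
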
 
	\begin{proof}
		OP arises for the strong user $\mathrm{u}_1$ when it fails to decode its own signal, the weak user's signal, or both. Thus, by definition, the OP for $\mathrm{u_1}$ can be mathematically expressed as
		\begin{align}
			P_\mathrm{out,u_1} = 1-\Pr\left(\gamma_\mathrm{sic}>\hat{\gamma}_{\mathrm{sic}},\gamma_\mathrm{u_1}>\hat{\gamma}_\mathrm{u_1}\right), \label{eq-def-op1}
		\end{align}
		in which $\hat{\gamma}_\mathrm{sic}$ is the SINR threshold of $\gamma_{\mathrm{sic}}$ and $\hat{\gamma}_\mathrm{u_1}$ denotes the SNR threshold of $\gamma_\mathrm{u_1}$. Given that the OP in NOMA scenario is mainly affected by the power allocation factors, $P_\mathrm{out,u_1}$ for the case that $p_\mathrm{u_2}\leq\hat{\gamma}_\mathrm{sic}p_\mathrm{u_1}$ becomes $1$. Therefore, for the case that $p_\mathrm{u_2}>\hat{\gamma}_\mathrm{sic}p_\mathrm{u_1}$, the OP can be computed as follows
		\begin{align}\notag
	&	P_\mathrm{out,u_1} \overset{(b)}{=} 1-\Pr\Bigg(\frac{\overline{\gamma}p_\mathrm{u_2}L_\mathrm{p}d_\mathrm{t}^{-\alpha}d_\mathrm{u_1}^{-\alpha}g_{\mathrm{fas},\mathrm{u_1}}}{\overline{\gamma}p_\mathrm{u_1}L_\mathrm{p}d_\mathrm{t}^{-\alpha}d_\mathrm{u_1}^{-\alpha}g_{\mathrm{fas},\mathrm{u_1}}+1}>\hat{\gamma}_{\mathrm{sic}},\\
	&\hspace{3cm}\overline{\gamma}p_\mathrm{u_1}L_\mathrm{p}d_\mathrm{t}^{-\alpha}d_\mathrm{u_1}^{-\alpha}g_{\mathrm{fas},\mathrm{u_1}}>\hat{\gamma}_\mathrm{u_1}\Bigg)\\\nonumber
	&=1-\Pr\Bigg(g_{\mathrm{fas},\mathrm{u_1}}>\frac{\hat{\gamma}_\mathrm{sic}d_\mathrm{t}^{\alpha}d_\mathrm{u_1}^{\alpha}}{\overline{\gamma}L_\mathrm{p}\left(p_\mathrm{u_2}-\hat{\gamma}_\mathrm{sic}p_\mathrm{u_1}\right)},\\
	&\hspace{3cm}g_{\mathrm{fas},\mathrm{u_1}}>\frac{\hat{\gamma}_\mathrm{u_1}d_\mathrm{t}^{\alpha}d_\mathrm{u_1}^{\alpha}}{\overline{\gamma}p_\mathrm{u_1}L_\mathrm{p}}\Bigg)\\
	& = 1-\Pr\left(g_\mathrm{fas,u_1}>\max\left\{\tilde{\gamma}_\mathrm{sic},\tilde{\gamma}_\mathrm{u_1}\right\}\right)\\
	& = F_{g_{\mathrm{fas},\mathrm{u_1}}}\left(\tilde{\gamma}_\mathrm{max}\right), \label{eq-pr-u1}
	\end{align}
	in which $(b)$ is followed by inserting \eqref{eq-sinr-sic} and \eqref{eq-snr-u1} into \eqref{eq-def-op1}, 		$\tilde{\gamma}_\mathrm{sic}=\frac{\hat{\gamma}_\mathrm{sic}d_\mathrm{t}^{\alpha}d_\mathrm{u_1}^{\alpha}}{\overline{\gamma}L_\mathrm{p}\left(p_\mathrm{u_2}-\hat{\gamma}_\mathrm{sic}p_\mathrm{u_1}\right)}$, $ \tilde{\gamma}_\mathrm{u_1}=\frac{\hat{\gamma}_\mathrm{u_1}d_\mathrm{t}^{\alpha}d_\mathrm{u_1}^{\alpha}}{\overline{\gamma}p_\mathrm{u_1}L_\mathrm{p}}$, and $\tilde{\gamma}_\mathrm{\max}=\max\left\{\tilde{\gamma}_\mathrm{sic},\tilde{\gamma}_\mathrm{u_1}\right\}$. Then, by considering the CDF of $g_{\mathrm{fas},i}$ from $\eqref{eq-cdf-fas}$, the proof is accomplished.  
	\end{proof}
	\begin{theorem}
			The OP for the weak FAS-equipped user $\mathrm{u}_2$ over the considered FAS-aided WPCN is given by \eqref{eq-op2}, in which 
					\begin{figure*}
				\begin{align}
					P_\mathrm{out,u_2} =\, \Phi_{\vec{R}_{\mathrm{u_2}}}\left(\sqrt{2}\mathrm{erf}^{-1}\left(1-4\sqrt{\tilde{\gamma}_\mathrm{u_2}}\mathcal{K}_1\left(2\sqrt{\tilde{\gamma}_\mathrm{u_2}}\right)\right),\dots,\sqrt{2}\mathrm{erf}^{-1}\left(1-4\sqrt{\tilde{\gamma}_\mathrm{u_2}}\mathcal{K}_1\left(2\sqrt{\tilde{\gamma}_\mathrm{u_2}}\right)\right)\hspace{0mm};\vartheta_2\right)\label{eq-op2}
				\end{align}
				\hrulefill
			\end{figure*}
						\begin{align}
				\tilde{\gamma}_\mathrm{u_2}=\frac{\hat{\gamma}_\mathrm{u_2}d_\mathrm{t}^{\alpha}d_\mathrm{u_2}^{\alpha}}{\overline{\gamma}L_\mathrm{p}\left(p_\mathrm{u_2}-\hat{\gamma}_\mathrm{u_2}p_\mathrm{u_1}\right)}. \label{eq-tilde2}
			\end{align}
	\end{theorem}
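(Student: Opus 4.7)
The proof will closely mirror that of Theorem~\ref{thm-op-u1}, since $\mathrm{u}_2$ performs only direct decoding without any SIC stage; this eliminates one of the two joint constraints from \eqref{eq-def-op1} and leaves a single-event probability. My plan is to start from $P_\mathrm{out,u_2} = \Pr(\gamma_\mathrm{u_2} \leq \hat{\gamma}_\mathrm{u_2})$, substitute the SINR from the system model, and then collapse the resulting inequality onto the scalar channel gain $g_\mathrm{fas,u_2}$, which appears in both the numerator and the denominator of $\gamma_\mathrm{u_2}$.

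First, I would cross-multiply $\gamma_\mathrm{u_2} \leq \hat{\gamma}_\mathrm{u_2}$. The direction of the rearranged inequality hinges on the sign of $p_\mathrm{u_2} - \hat{\gamma}_\mathrm{u_2}p_\mathrm{u_1}$: if $p_\mathrm{u_2} \leq \hat{\gamma}_\mathrm{u_2}p_\mathrm{u_1}$, then $\gamma_\mathrm{u_2} < p_\mathrm{u_2}/p_\mathrm{u_1} \leq \hat{\gamma}_\mathrm{u_2}$ for every channel realization and hence $P_\mathrm{out,u_2} = 1$ trivially. In the nontrivial regime $p_\mathrm{u_2} > \hat{\gamma}_\mathrm{u_2}p_\mathrm{u_1}$, the event reduces to $g_\mathrm{fas,u_2} \leq \tilde{\gamma}_\mathrm{u_2}$ with $\tilde{\gamma}_\mathrm{u_2}$ exactly as in \eqref{eq-tilde2}, so that $P_\mathrm{out,u_2} = F_{g_\mathrm{fas,u_2}}(\tilde{\gamma}_\mathrm{u_2})$. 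Unlike the strong-user case, there is no auxiliary threshold to take a maximum against, so no $\max\{\cdot,\cdot\}$ appears in the argument.

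Second, I would invoke \eqref{eq-cdf-fas} with $i=\mathrm{u_2}$, evaluated at the common argument $r=\tilde{\gamma}_\mathrm{u_2}$. Plugging the marginal CDF \eqref{eq-cdf-eq} into the Gaussian quantile map $\phi^{-1}(u) = \sqrt{2}\,\mathrm{erf}^{-1}(2u-1)$ yields $\sqrt{2}\,\mathrm{erf}^{-1}\!\bigl(1-4\sqrt{\tilde{\gamma}_\mathrm{u_2}}\mathcal{K}_1(2\sqrt{\tilde{\gamma}_\mathrm{u_2}})\bigr)$ in each of the $N_\mathrm{u_2}$ slots of $\Phi_{\vec{R}_\mathrm{u_2}}$, which is precisely the form displayed in \eqref{eq-op2}. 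I do not anticipate a serious obstacle: the only point requiring care is the case split on the sign of $p_\mathrm{u_2} - \hat{\gamma}_\mathrm{u_2}p_\mathrm{u_1}$ when cross-multiplying, and the algebraic simplification $2F_{g_\mathrm{eq,u_2}^{n_{u_2}}}(\tilde{\gamma}_\mathrm{u_2}) - 1 = 1 - 4\sqrt{\tilde{\gamma}_\mathrm{u_2}}\mathcal{K}_1(2\sqrt{\tilde{\gamma}_\mathrm{u_2}})$ inside the inverse error function, both of which are mechanical.
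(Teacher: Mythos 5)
Your proposal is correct and follows essentially the same route as the paper's proof: reduce $\Pr(\gamma_\mathrm{u_2}\leq\hat{\gamma}_\mathrm{u_2})$ to $F_{g_\mathrm{fas,u_2}}(\tilde{\gamma}_\mathrm{u_2})$ in the regime $p_\mathrm{u_2}>\hat{\gamma}_\mathrm{u_2}p_\mathrm{u_1}$ and then substitute the Gaussian-copula CDF \eqref{eq-cdf-fas} with the marginal \eqref{eq-cdf-eq}. Your explicit handling of the degenerate case $p_\mathrm{u_2}\leq\hat{\gamma}_\mathrm{u_2}p_\mathrm{u_1}$ is slightly more careful than the paper's, which only states this for the strong user, but the substance is identical.
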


\begin{proof}
	By definition, the OP for the weak user $\mathrm{u}_2$ can be defined as
	\begin{align}
		P_\mathrm{out,r_2} = \Pr\left(\gamma_\mathrm{u_2}\leq\hat{\gamma}_\mathrm{u_2}\right),
	\end{align}
	in which $\hat{\gamma}_{\mathrm{u_2}}$ is the SINR threshold of SINR $\gamma_{\mathrm{u_2}}$. Following the same approach in the proof of Theorem \ref{thm-op-u1}, the OP for $\mathrm{u}_2$ when $p_\mathrm{u_2}>\hat{\gamma}_\mathrm{u_2}p_\mathrm{u_1}$ can be derived as
	\begin{align}
	&P_\mathrm{out,u_2} = \Pr\left(\frac{\overline{\gamma}p_\mathrm{u_2}L_\mathrm{p}d_\mathrm{t}^{-\alpha}d_\mathrm{u_2}^{-\alpha}g_\mathrm{fas,u_2}}{\overline{\gamma}p_\mathrm{u_1}L_\mathrm{p}d_\mathrm{t}^{-\alpha}d_\mathrm{u_2}^{-\alpha}g_\mathrm{fas,r_2}+1}\leq\hat{\gamma}_\mathrm{u_2}\right)\\
	&=\Pr\left(g_{\mathrm{fas},\mathrm{u_2}}\leq\frac{\hat{\gamma}_\mathrm{u_2}d_\mathrm{t}^{\alpha}d_\mathrm{u_2}^{\alpha}}{\overline{\gamma}L_\mathrm{p}\left(p_\mathrm{u_2}-\hat{\gamma}_\mathrm{u_2}p_\mathrm{u_1}\right)}\right)\\
	&=F_{g_{\mathrm{fas},\mathrm{u_2}}}\left(\frac{\hat{\gamma}_\mathrm{u_2}d_\mathrm{t}^{\alpha}d_\mathrm{u_2}^{\alpha}}{\overline{\gamma}L_\mathrm{p}\left(p_\mathrm{u_2}-\hat{\gamma}_\mathrm{u_2}p_\mathrm{u_1}\right)}\right). \label{eq-pr-u2}
	\end{align}
Now, by considering the CDF of $g_{\mathrm{fas},i}$, \eqref{eq-op2} is obtained and the proof is completed. 
	\end{proof}
Despite OP expressions in \eqref{eq-op1} and \eqref{eq-op2} accurately assessing the proposed system performance, we are interested in analyzing the OP behavior in the high SNR regime to gain more insights. For this purpose, by utilizing the series expansion of $\mathcal{K}_1\left(r\right)$ when $r\rightarrow 0$, the CDF of the equivalent channel gain in \eqref{eq-cdf-eq} can be written as \cite{ghadi2024performance1}
\begin{align}
F_{g_{\mathrm{eq},i}^{n_i}}(r)\approx r\left(1-2\zeta-\ln r\right), \label{eq-cdf-asy}
\end{align}
in which $\zeta$ is the Euler-Mascheroni constant. 
\begin{corollary}
The OP for the string and weak FAS-equipped users over the considered FAS-aided WPCN is given by \eqref{eq-op1-asy} and \eqref{eq-op2-asy}, respectively, where $\tilde{\gamma}_\mathrm{sic}$ and  $\tilde{\gamma}_\mathrm{u_1}$ is defined in \eqref{eq-tildes}, and $\tilde{\gamma}_\mathrm{sic}$ is provided in \eqref{eq-tilde2}.
\begin{proof}
By applying \eqref{eq-cdf-asy} to \eqref{eq-pr-u1} and \eqref{eq-pr-u2} the proof is completed.  \vspace{-0.5cm}
\end{proof}
	\begin{figure*}
	\begin{align}
		P_\mathrm{out,u_1}^\infty = \Phi_{\vec{R}_{\mathrm{u_2}}}\left(\sqrt{2}\mathrm{erf}^{-1}\left(2 \tilde{\gamma}_\mathrm{max}\left(1-2\zeta-\ln \tilde{\gamma}_\mathrm{max}\right)-1\right),\dots,\sqrt{2}\mathrm{erf}^{-1}\left(2\tilde{\gamma}_\mathrm{max}\left(1-2\zeta-\ln \tilde{\gamma}_\mathrm{max}\right)-1\right)\hspace{0mm};\vartheta_1\right)\label{eq-op1-asy}
	\end{align}\vspace{-0.5cm}
	\hrulefill
\end{figure*}
\begin{figure*}
	\begin{align}
		P_\mathrm{out,u_2}^\infty =\, \Phi_{\vec{R}_{\mathrm{u_2}}}\left(\sqrt{2}\mathrm{erf}^{-1}	\left(2 \tilde{\gamma}_\mathrm{u_2}\left(1-2\zeta-\ln \tilde{\gamma}_\mathrm{u_2}\right)-1\right),\dots,\sqrt{2}\mathrm{erf}^{-1}	\left(2 \tilde{\gamma}_\mathrm{u_2}\left(1-2\zeta-\ln \tilde{\gamma}_\mathrm{u_2}\right)-1\right)\hspace{0mm};\vartheta_2\right)\label{eq-op2-asy}
	\end{align}
	\hrulefill
\end{figure*}
\end{corollary}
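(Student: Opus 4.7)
The plan is to leverage the exact OP expressions already obtained in the proofs of Theorems~1 and~2, namely $P_\mathrm{out,u_1}=F_{g_{\mathrm{fas},\mathrm{u_1}}}(\tilde\gamma_\mathrm{max})$ in \eqref{eq-pr-u1} and $P_\mathrm{out,u_2}=F_{g_{\mathrm{fas},\mathrm{u_2}}}(\tilde\gamma_\mathrm{u_2})$ in \eqref{eq-pr-u2}, and simply swap out the exact marginal CDF $F_{g_{\mathrm{eq},i}^{n_i}}(r)=1-2\sqrt{r}\mathcal{K}_1(2\sqrt{r})$ for its small-argument surrogate \eqref{eq-cdf-asy}. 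The only thing that needs to be observed first is that in the high-SNR regime $\overline{\gamma}\to\infty$ the threshold arguments $\tilde\gamma_\mathrm{max}$ and $\tilde\gamma_\mathrm{u_2}$, as defined in \eqref{eq-tildes} and \eqref{eq-tilde2}, both scale as $\overline{\gamma}^{-1}$ and hence tend to zero, which is precisely the regime in which the approximation \eqref{eq-cdf-asy} applies.

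Next, I would recall from \eqref{eq-cdf-fas} that the only place the marginal CDFs enter $F_{g_{\mathrm{fas},i}}$ is through the Gaussian-copula quantile transform $\phi^{-1}(u)=\sqrt{2}\,\mathrm{erf}^{-1}(2u-1)$. Substituting \eqref{eq-cdf-asy} in place of $F_{g_{\mathrm{eq},i}^{n_i}}$ therefore replaces $2F_{g_{\mathrm{eq},i}^{n_i}}(r)-1=1-4\sqrt{r}\mathcal{K}_1(2\sqrt{r})$ by $2r(1-2\zeta-\ln r)-1$, which, taken at $r=\tilde\gamma_\mathrm{max}$ or $r=\tilde\gamma_\mathrm{u_2}$, yields exactly the arguments inside $\mathrm{erf}^{-1}(\cdot)$ in \eqref{eq-op1-asy} and \eqref{eq-op2-asy}. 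The common correlation matrix $\vec{R}_i$ and the dependence parameter $\vartheta_i$ are untouched because the copula structure is independent of the marginals, so the joint CDF $\Phi_{\vec{R}_i}$ simply inherits the new arguments.

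The only delicate step, and the one I would flag as the main obstacle, is justifying the asymptotic identity $1-2\sqrt{r}\mathcal{K}_1(2\sqrt{r})\sim r(1-2\zeta-\ln r)$ as $r\to 0^+$. This follows from the known small-argument series $\mathcal{K}_1(z)=\frac{1}{z}+\frac{z}{2}\bigl(\ln(z/2)+\zeta-\tfrac{1}{2}\bigr)+O(z^3\ln z)$: multiplying by $2\sqrt{r}$ with $z=2\sqrt{r}$, the leading $1$ cancels and the next term yields the $r(1-2\zeta-\ln r)$ behaviour. Once this is in hand, monotonicity and continuity of $\mathrm{erf}^{-1}$ and $\Phi_{\vec{R}_i}$ in each argument ensure that the substitution propagates cleanly into the joint CDF, which completes the proof.
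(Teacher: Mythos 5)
Your proposal is correct and follows exactly the paper's (one-line) argument: substitute the small-argument surrogate \eqref{eq-cdf-asy} for the exact marginal CDF inside the copula construction evaluated at $\tilde{\gamma}_\mathrm{max}$ and $\tilde{\gamma}_\mathrm{u_2}$. You simply make explicit the steps the paper leaves implicit --- that the thresholds scale as $\overline{\gamma}^{-1}\to 0$ so the expansion of $\mathcal{K}_1$ applies, and the verification of that expansion, which the paper outsources to a citation --- so there is nothing to add.
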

\section{Numerical Results}\label{sec-num}
In this section, we assess the analytical derivations of the OP, which are double-checked with the Monte Carlo simulation. To do so, we set the channel model parameters as $p_\mathrm{u_1}=0.3$, $p_\mathrm{u_2}=0.7$, $\alpha=2.5$, $L_\mathrm{p}=1$,  $d_\mathrm{t}=d_\mathrm{u_2}=10$m, $d_\mathrm{u_1}=5$m, and  $\hat{\gamma}_{\mathrm{u_1}}=\hat{\gamma}_{\mathrm{u_2}}=\hat{\gamma}_{\mathrm{{sic}}}=0$dB. Furthermore, given that the OP derivations are in terms of the joint multivariate normal distribution, we implement them numerically through the mathematical package of the programming language MATLAB. Moreover, we utilize the proposed algorithm in \cite{ghadi2023gaussian} to simulate the Gaussian copula in the considered WPCN.
\begin{figure*}[]
	\begin{center}
		\vspace{-0.5cm}
		\subfigure[]{\includegraphics[width=.67\columnwidth]{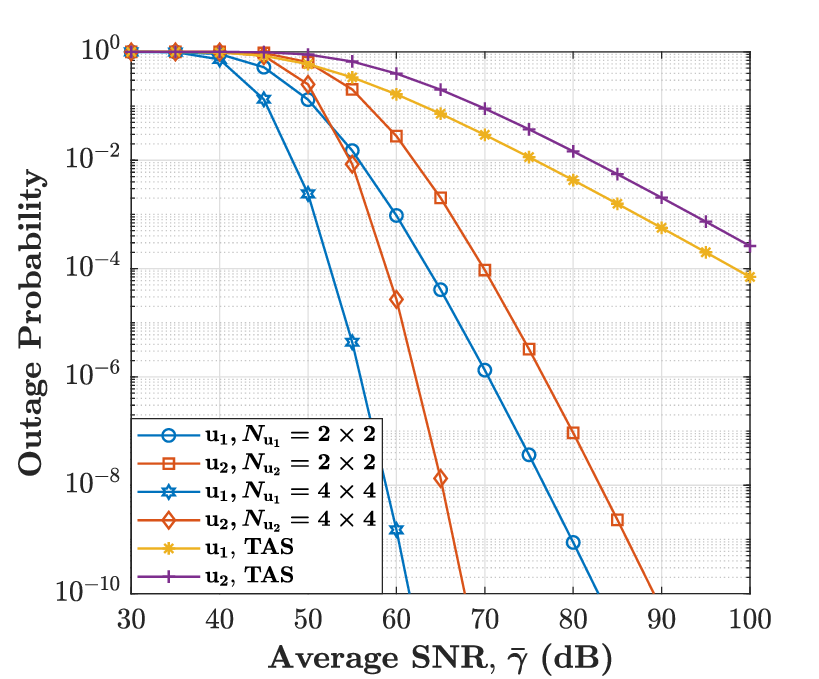}\label{fig-1}}\hspace{2.5cm}
		\subfigure[]{\includegraphics[width=.67\columnwidth]{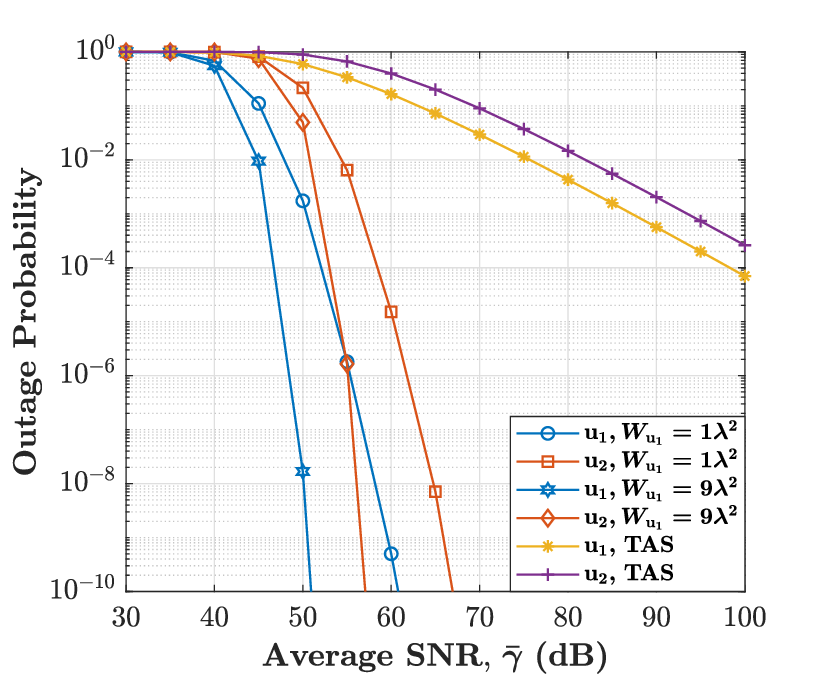}\label{fig-2}}\vspace{-0.3cm}
		\caption{OP versus  $\overline{\gamma}$ for the strong and weak NOMA FAS-equipped users $\mathrm{u_1}$ and $\mathrm{u_2}$ (a) when $W_\mathrm{u_1}=W_\mathrm{u_2}=1 \lambda^2$; (b) when $N_\mathrm{u_1}=N_\mathrm{u_2}=25$.}\label{figs1}\vspace{0cm}
	\end{center}
\end{figure*}
\begin{figure*}[]
	\begin{center}
		\vspace{-0.6cm}
		\subfigure[]{\includegraphics[width=.67\columnwidth]{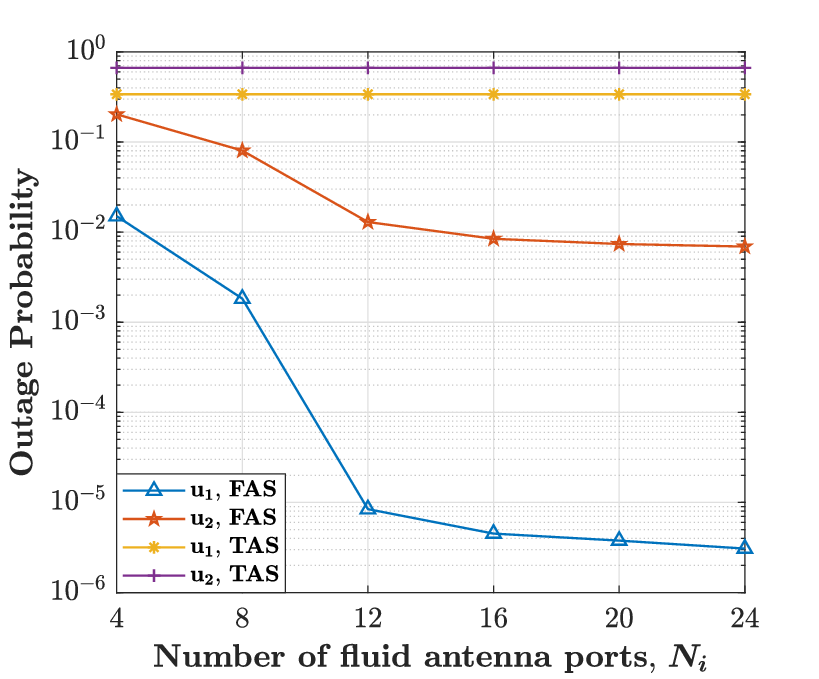}\label{fig-3}}\hspace{2.5cm}
		\subfigure[]{\includegraphics[width=.67\columnwidth]{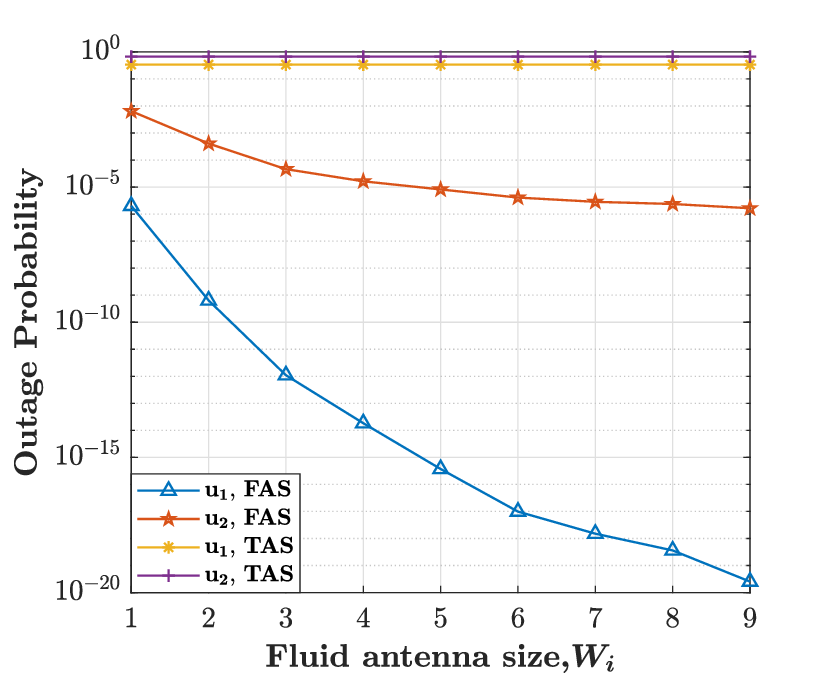}\label{fig-4}}\vspace{-0.3cm}
		\caption{(a) OP versus $N_i$ for the strong and weak NOMA FAS-equipped users $\mathrm{u_1}$ and $\mathrm{u_2}$ when $W_\mathrm{u_1}=W_\mathrm{u_2}=1\lambda^2$ and $\overline{\gamma}=55$dB; (b) OP versus $W_i$ for the strong and weak NOMA FAS-equipped users $\mathrm{u_1}$ and $\mathrm{u_2}$ when $N_\mathrm{u_1}=N_\mathrm{u_2}=25$ and $\overline{\gamma}=55$dB. }\label{figs2}\vspace{-0.7cm}
	\end{center}
\end{figure*}

Figs. \ref{fig-1} and \ref{fig-2} illustrate the OP performance for both strong and weak NOMA users in terms of the average SNR $\overline{\gamma}$ when different numbers of fluid antenna ports $N_i$ and various values of fluid antenna size $W_i$ are considered. In both figures, it can be seen that as $\overline{\gamma}$ grows, the OP for both users decreases, which is obvious since the channel quality improves. However, we can observe that increasing $\overline{\gamma}$ provides a lower OP for the strong user $\mathrm{u}_1$ compared with the weak user $u_2$. The key reason behind this trend is that $\mathrm{u}_1$ takes advantage of the SIC which provides a larger SINR in contrast to $\mathrm{u}_2$. Moreover, it is evident that increasing the number of fluid antenna ports or the fluid antenna size can significantly enhance the OP performance for both strong and weak users, so that this behavior becomes more tangible in the higher SNR regime. More precisely, the main reason for this behavior is that although increasing $N_i$ under a fixed value of $W_i$ raises the spatial correlation between fluid antenna ports, it is also able to simultaneously enhance channel capacity, diversity gain, and spatial multiplexing. Besides, by increasing $W_i$ for a constant number of $N_i$, the spatial separation between fluid antenna ports decreases, which leads to a weaker spatial correlation. In this regard, we can also see that NOMA FAS-equipped users with only one activated port experience more reliable transmission compared with the TAS-equipped users, i.e., the single fixed-antenna system.  For instance, under a fixed value of $\overline{\gamma}=60$dB, the OP for the strong FAS-equipped user $\mathrm{u}_1$ with $N_\mathrm{u_1}=4$  and $W_\mathrm{u_1}=1\lambda^2$ is around $10^{-3}$, while the OP for a strong user with a single fixed-antenna is close to $10^{-1}$. 

In order to gain more insights into how the OP performance for NOMA users changes based on the number of fluid antenna ports and the fluid antenna size, we present Figs. \ref{figs2}. From Fig. \ref{fig-3}, it can be seen that the OP performance for both users initially enhances and then becomes saturated as $N_i$ continuously grows. This is mainly due to the fact that increasing $N_i$ under a fixed $W_i$ leads to the ports becoming too close to each other and the spatial correlation effects dominating; thereby, diversity gain reduces after reaching a certain point and the decrease in OP slows down so that eventually saturates. Additionally, we can observe that when $N_i$ increases, the reduction in OP for the strong NOMA user $\mathrm{u}_1$ tends to be much more significant than for the weak NOMA user $\mathrm{u}_2$. The key reason behind such a behavior is that $\mathrm{u}_1$ being less susceptible to interference due to its higher SINR, can take better advantage of the interference mitigation technique, i.e., SIC, resulting in a larger reduction in the OP. Fig. \ref{fig-4} also indicates that as $W_i$ grows mainly due to a fixed large number of fluid antenna ports, e.g., $N_i=25$, the OP reduction is unhindered, albeit decreasing slowly at the larger $W_i$. This is mainly because increasing $W_i$ can significantly reduce the spatial correlation between fluid antenna ports, resulting in improved signal reception, reduced interference, and ultimately, a lower OP. Moreover, we can observe that increasing $W_i$ achieves a lower OP for $\mathrm{u}_1$ compared with $\mathrm{u}_2$ since the strong user benefits more from the interference mitigation and enhanced spatial diversity and multiplexing gain. For instance, increasing $W_i$ from $1\lambda^2$ to $9\lambda^2$ under a fixed $N_i=25$ for $i\in\{\mathrm{u}_1, \mathrm{u_2}\}$ changes the OP of $\mathrm{u}_1$ from the order of $10^{-6}$ to around $10^{-20}$, whereas the OP for $\mathrm{u}_2$  decreases from the order of $10^{-2}$ to  the order of $10^{-6}$. In contrast, it can be seen that the OP for both NOMA users under single fixed-antenna deployment is almost near $1$ since the diversity gain remains constant even if $W_i$ increases.\vspace{-0.01cm}

\section{Conclusion}\label{sec-con}
In this paper, we considered a WPCN under the NOMA scheme, where a single-antenna transmitter, which is powered up by a remote PB, simultaneously sends the superposed signal including messages to NOMA users. Besides, we assumed that each NOMA user takes advantage of a planar FAS that is capable of changing its position in a pre-set 2D space to reach the maximum SNR. In order to evaluate the system performance, after introducing the distribution of the equivalent channel to FAS-equipped users, we derived the compact theoretical expressions for the OP in terms of the multivariate normal CDF. Further, we obtained asymptomatic OP in the high SNR regime. Results showed that FAS-equipped users experience more reliable transmission than TAS-equipped users in the NOMA WPCN.\vspace{-0.05cm}

\bibliographystyle{IEEEtran}
\bibliography{sample.bib}

\end{document}